\theoremstyle{amsart}
\newfont{\fnt}{cmsy10}
\newfont{\sss}{cmr10}
\newfont{\azb}{wncyr10}
\newfont{\azbit}{wncyi10}
\theoremstyle{definition}
\theoremstyle{plain}
\newtheorem{vt}{Theorem}
\newtheorem{lm}{Lemma}
\newtheorem{prp}{Proposition}
\theoremstyle{definition}
\newtheorem{pz}{Remark}
\begin{document}
\title[On symmetries and conservation laws of the Majda--Biello system]{
{\protect\vspace*{-1.6cm}}
On symmetries and conservation laws of the Majda--Biello system}
\author{Ji\v{r}ina Vodov\'a-Jahnov\'{a}}
\keywords{Evolution equations, symmetries, conservation laws, Hamiltonian operators.}
\subjclass[2010]{37K05, 37K10}
\address{Mathematical Institute, Silesian University in Opava, Na Rybn\'{i}\v{c}ku~1, 746~01 Opava, Czech Republic}
\email{Jirina.Vodova@math.slu.cz, Jirina.Jahnova@math.slu.cz}
\begin{abstract} 
\looseness=-1
In 2003, A.J. Majda and J.A. Biello derived and studied the so-called reduced equations for equatorial baroclinic-barotropic waves, to which we refer as to the Majda--Biello system. The equations in question describe the nonlinear interaction of long-wavelength equatorial Rossby waves and barotropic Rossby waves with a significant midlatitude projection in the presence of suitable horizontally and vertically sheared zonal mean flows.

Below we present a Hamiltonian structure for Majda--Biello system and describe all generalized symmetries and conservation laws for the latter. It turns out that there are only three symmetries corresponding to $x$-translations, $t$- translations and to a scaling of $t$, $x$, $u$ and $v$, and four conservation laws, one of which is associated to the conservation of energy, the second conserved quantity is just the Hamiltonian functional and the other two are Casimir functionals of the Hamiltonian operator admitted by our system. Our result provides {\em inter alia} a rigorous proof of the fact that the Majda--Biello system has just the conservation laws mentioned in the paper by Majda and Biello.
\end{abstract}
\maketitle
\section{Introduction}

The problem of short-time and long-time weather forecasts and the prediction of climate variations on seasonal, yearly and decade time scales is currently of great interest for both climatologists and meteorologists, as well as mathematicians and physicists. One of the recent discoveries in this field is the profound impact of variations in the tropics on long-range seasonal forecasting and climate on the entire globe. The problem of dynamics of the equatorial atmosphere is intensively studied e.g.\ in \cite{majda3, kiladis, majda2, majda4, majda1, majda5, majda6, wang}. In \cite{majda1}, the so-called two-layer-equatorial $\beta$-plane equations (TLEPE) for the barotropic (propagating in the poleward direction) and baroclinic (i.e., trapped in the vicinity of and propagated along the equator) horizontal velocity and pressure have been derived. When the barotropic-baroclinic cross-interaction terms in the TLEPE equations are neglected, then these equations split naturally into nonlinear rotating 2D incompressible Euler equations for the barotropic flow and a rotating linear shallow-water system for the baroclinic mode (see e.g.\ \cite{majda3, majda2}; cf.\ also \cite{burde} and references therein for nonlinear system describing shallow-water waves) respectively, the former possesing the wave-like solutions known as barotropic Rossby waves that travel northward and southward towards the midlatitudes and therefore are an interesting (and it is believed that main) engine for energy exchange between the tropics and extra-tropics, the latter possessing wave-like solutions that are trapped near the equator.\looseness=-1

In \cite{majda1}, the TLEPE equations (containing the cross-interactions terms) are studied, the so-called long-wave-scaled-equatorial baroclinic-barotropic equations (LWSEBB) are derived from them, and small-amplitude weakly nonlinear solutions  of the LWSEBB equations in the form of an asymptotic expansion with parameters assuring the resonance of dispersionless packets are constructed. The solvability conditions yield the
the reduced equations for equatorial baroclinic-barotropic waves, to which we hereinafter refer to as to the
Majda--Biello system:
\begin{eqnarray}\label{syst}
u_t&=& du_{xxx}-vu_x-uv_x\nonumber\\
v_t&=&v_{xxx}-uu_x.
\end{eqnarray}
Here $d=1-\frac{1}{(2m+1)^2}$ is a parameter depending on the meridional index $m\in\mathbb{Z}_+$ (see e.g.\ \cite{majda2}), and $u(x,t)$ and $v(x,t)$ are functions that describe the amplitudes of the mentioned solutions of the LWSEBB equations.
This KdV-like system describes the nonlinear resonant interaction between barotropic and baroclinic Rossby waves.
The properties of the system (\ref{syst}) are studied in \cite{majda3} and \cite{majda1}, where several conservation laws are found and employed for the construction of a perturbed system. Note that an analytic solitary wave solution in the form
$\mathrm{sech}^2(\xi)$ for both components of (\ref{syst}) is found in \cite{majda3}.
In what follows we consider (\ref{syst}) for arbitrary real $d$ unless otherwise explicitly stated.

Quite a few third-order evolution systems, especially those arising in applications, including the celebrated Korteweg--de Vries equation, possess local or nonlocal Hamiltonian structures, cf.\ e.g.\ \cite{ds, Olver, sergy, sergdem, vlad} and references therein. The system (\ref{syst}) is no exception: it is Hamiltonian (we refer the reader to \cite{Olver} for details on the Hamiltonian formalism for PDEs), with an operator $$\mathfrak{D}=\left(\begin{array}{cc}D_x&0\\ 0&D_x\end{array}\right)$$ being the Hamiltonian operator and $\mathcal{H}=-\frac{1}{2}\int (du_x^2+v_x^2+u^2v)\ \mathrm{d}x$ being the corresponding Hamiltonian functional; here $D_x$ denotes the total $x$-derivative
$$D_x=\frac{\partial}{\partial x}+\sum_{j=0}^{\infty}u_{j+1}\frac{\partial}{\partial u_j}+\sum_{j=0}^{\infty}v_{j+1}\frac{\partial}{\partial v_j},$$
where $u_j$ (resp.\ $v_j$) denotes the $j$th derivative of $u$ (resp.\ of $v$) with respect to the spatial variable $x$; $u_0\equiv u$, $v_0\equiv v.$
The above means that the system (\ref{syst}) can be written as
$$(u_t,v_t)^{T}=\mathfrak{D}\delta\mathcal{H},$$
where
$\int\ \mathrm{d}x$ is understood as a formal integral in the sense of calculus of variations (see e.g.\ \cite{dickey, Olver}) and the superscript $T$ indicates the transposed matrix. For any functional $\mathcal{F}=\int f(x,t,u,v,\dots,u_s,v_r)\ \mathrm{d}x$ with a smooth density $f$ 
its variational derivative is defined (see e.g.\ \cite{Olver}) as
$$\delta\mathcal{F}=\left(\frac{\delta f}{\delta u},\frac{\delta f}{\delta v}\right)^T,\ \mbox{where } \frac{\delta }{\delta u}=\sum_{j=0}^{\infty}(-D_x)^j\circ\frac{\partial }{\partial u_j},\  \frac{\delta }{\delta v}=\sum_{j=0}^{\infty}(-D_x)^j\circ\frac{\partial }{\partial v_j}.$$

Existence of the Hamiltonian operator implies (cf.\ e.g.\ \cite{wh_integrability, complete_lists, Olver})
that an infinite set of standard obstacles for existence of
infinitely many conservation laws of increasing order for (\ref{syst}) vanishes,
thus leaving open the problem of description of a complete set of conservation laws for (\ref{syst}).\looseness=-1

Below we describe all generalized symmetries and all nontrivial conservation laws for the Majda--Biello system (\ref{syst}). 

First of all, we prove that this system for $d\neq 0$ is not symmetry integrable: it has no generalized symmetries of order higher than eight.
This allows us to find all generalized symmetries for (\ref{syst}) with $d\neq 0$. Moreover, in fact by our Theorem~\ref{symetrie_veta} for $d\neq 0$ there exist just three symmetries of (\ref{syst}), and all three are equivalent to the Lie point ones.
Note that while using the general results from \cite{meshkov} on the integrability of systems from a certain class that includes (\ref{syst}) one could prove non-existence of {\em time-independent} generalized symmetries for (\ref{syst}) of order greater than five, our result is stronger than that: it shows that the same holds true for any generalized symmetries of order greater than three including {\em explicitly time-dependent} ones.\looseness=-1

Knowing all generalized symmetries in conjunction with the Hamiltonian operator $\mathfrak{D}$ we obtain a complete description of the conservation laws for (\ref{syst}) in Theorem \ref{zakony_veta} below. It turns out that the only conserved functionals are the energy, the Hamiltonian functional and the functionals corresponding to the zonal averages, $\overline{u}(t)=\int u(x,t)\mathrm{d}x$ and $\overline{v}(t)=\int v(x,t)\mathrm{d}x $.  Our result provides {\em inter alia} a rigorous proof of the fact that the system (\ref{syst}) for $d\neq 0$ 
has just the conservation laws found by Majda and Biello in \cite{majda3} and \cite{majda1}.\looseness=-1
\section{Preliminaries}
Below we work in the jet space $J^{\infty}(\pi)$ where $\pi:\mathbb{R}^3\to\mathbb{R}$ is a trivial bundle, with local coordinates $x, u^1, u ^2, u^1_1,u^2_1,u^1_2,u ^2_2,\dots$; here $x$ stands for the spatial variable, $u^{\alpha}$ for the dependent variables and $u^{\alpha}_j$'s correspond to their derivatives $\partial^j u^{\alpha}/\partial x^j$. By $\mathbf{u}^{(n)}$ we mean the set of all derivatives of $u^{1}$ and $u^2$ with respect to $x$ up to order $n$.
Recall \cite{Olver} that a \textit{differential function} is a function $f$ that smoothly depends on $t$, $x$, $u^{\alpha}$'s ($\alpha=1, 2
$)
and a finite number of derivatives of $u^{\alpha}$'s with respect to $x$, i.e. there exists $n\in\mathbb{N}\cup\{0\}$ such that $f=f(x,t,\mathbf{u}^{(n)})$.
The \textit{differential order} of a differential function $f$, denoted by $\mathrm{ord}(f)$, is the maximal $s\in\mathbb{N}\cup\{0\}$ such that $\frac{\partial f}{\partial u^{\alpha}_s}\neq 0$ for some $\alpha\in\{1,2\}$ if $f$ is not a quasiconstant function in the sense of \cite{Kac}, i.e., a function depending only on $x$ and $t$. If $f$ is quasiconstant then $\mathrm{ord}f=-\infty$ by definition. If we are not interested in the differential order of a differential function, we write just $f=f[\mathbf{u}]$.\looseness=-1

Consider a system of $n$-th order evolution equations of the form
\begin{equation}\label{evsyst}
\mathbf{u}_t=\mathbf{F}(x,\mathbf{u}^{(n)}),
\end{equation}
where $\mathbf{u}=(u^1,u^2)^T$, 
and $\mathbf{F}=(F_1, F_2)^T$ is a 2-tuple of explicitly time-independent differential functions.
A 2-tuple of differential functions\footnote{For the sake of simplicity we identify here and below the generalized symmetry with its characteristics, cf.\ e.g.\ \cite{wh_integrability, Olver}.}  $\mathbf{Q}=(Q_1, Q_2)^T$ with $Q_{\alpha}=Q_{\alpha}(x, t,\mathbf{u}^{(k)})$ for all $\alpha =1, 2$, is called a \textit{generalized symmetry of order $k$ for (\ref{evsyst})} if $\mathrm{ord}\ Q_{\alpha}=k$ for at least one $\alpha\in\{1,2\}$,
\begin{equation}\label{defsym}D_t(\mathbf{Q})=\mathrm{D}_{\mathbf{F}}(\mathbf{Q}),\end{equation}
where $D_t=\frac{\partial}{\partial t}+\sum_{\alpha=1}^2\sum_{j}D_x^j(F_{\alpha})\frac{\partial}{\partial u^{\alpha}_j}$, and $\mathrm{D}_{\mathbf{F}}$ is a $2\times 2$ matrix differential operator with entries $(\mathrm{D}_{\mathbf{F}})_{\mu\nu}=\sum_{j}\frac{\partial F_{\mu}}{\partial u^{\nu}_j}D_x^j$, so $(\mathrm{D}_{\mathbf{F}}(\mathbf{Q}))_\mu=\sum_{\nu}(\mathrm{D}_{\mathbf{F}})_{\mu\nu}(Q_\nu)$.

\begin{lm}[\cite{complete_lists}]\label{gs}
If $\mathbf{Q}=(Q_1, Q_2)$ is a generalized symmetry for the system (\ref{evsyst}) then the condition
\begin{equation}\label{podm}\frac{\partial}{\partial t}(\mathrm{D_{\mathbf{Q}}})+\mathrm{pr}\ \mathbf{v}_{\mathbf{F}}(\mathrm{D_{\mathbf{Q}}})+\left[\mathrm{D_{\mathbf{Q}}},\mathrm{D_{\mathbf{F}}}\right]=\mathrm{pr}\ \mathbf{v}_{\mathbf{Q}}(\mathrm{D_{\mathbf{F}}})\end{equation}
holds identically. Here $\frac{\partial}{\partial t}(\mathrm{D_{\mathbf{Q}}})$ is a matrix differential operator with the entries $\left(\frac{\partial}{\partial t}(\mathrm{D_{\mathbf{Q}}})\right)_{\mu\nu}=\frac{\partial }{\partial t}\left(\mathrm{D_{\mathbf{Q}}}\right)_{\mu\nu}$, $\mathrm{pr}\ \mathbf{v}_{\mathbf{F}}(\mathrm{D_{\mathbf{Q}}})$ is a differential operator with the entries $\left(\mathrm{pr}\ \mathbf{v}_{\mathbf{F}}(\mathrm{D_{\mathbf{Q}}})\right)_{\mu\nu}=\mathrm{pr}\ \mathbf{v}_{\mathbf{F}}\left(\mathrm{D_{\mathbf{Q}}}\right)_{\mu\nu}$ where $\mathrm{pr}\ \mathbf{v}_{\mathbf{F}}=\sum_{\alpha,i}D_x^i(F_{\alpha})\frac{\partial}{\partial u^{\alpha}_i}$, the operator $\mathrm{pr}\ \mathbf{v}_{\mathbf{Q}}(\mathrm{D_{\mathbf{F}}})$ is defined in a similar fashion, and $\left[\cdot,\cdot\right]$ is the usual commutator of operators.
\end{lm}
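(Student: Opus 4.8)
The plan is to obtain (\ref{podm}) by applying the linearization (formal Fr\'echet-derivative) operator to both sides of the symmetry condition (\ref{defsym}). The point is that (\ref{defsym}) is an identity between differential functions on $J^\infty(\pi)$ --- the evolution equations have already been absorbed into the total derivative $D_t$, so there is no on-shell subtlety --- and hence one may apply to both of its sides the operator sending a 2-tuple $\mathbf{P}=(P_1,P_2)^T$ of differential functions to the matrix operator $\mathrm{D}_{\mathbf{P}}$ given, by the same formula as $\mathrm{D}_{\mathbf{F}}$, by $(\mathrm{D}_{\mathbf{P}})_{\mu\nu}=\sum_j(\partial P_\mu/\partial u^\nu_j)D_x^j$; equal differential functions have equal linearizations.

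Two elementary facts about $\mathbf{P}\mapsto\mathrm{D}_{\mathbf{P}}$ will be needed: first, $\mathrm{D}_{\partial\mathbf{P}/\partial t}=\partial(\mathrm{D}_{\mathbf{P}})/\partial t$ (immediate from the commutativity of $\partial/\partial t$ with $\partial/\partial u^\nu_j$); and second, $\mathrm{D}_{D_x^j(\mathbf{P})}=D_x^j\circ\mathrm{D}_{\mathbf{P}}$ for all $j\geq 0$, which follows by induction on $j$ from $[\partial/\partial u^\alpha_m,D_x]=\partial/\partial u^\alpha_{m-1}$. Armed with these I would linearize the two sides of (\ref{defsym}) term by term. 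Writing $(D_t(\mathbf{Q}))_\mu=\partial Q_\mu/\partial t+\sum_{\alpha,j}D_x^j(F_\alpha)\,(\partial Q_\mu/\partial u^\alpha_j)$, the contribution of the $\partial Q_\mu/\partial t$ term is $\partial(\mathrm{D}_{\mathbf{Q}})/\partial t$; varying $F_\alpha$ inside $D_x^j(F_\alpha)$ and reassembling by means of the second fact yields $\mathrm{D}_{\mathbf{Q}}\circ\mathrm{D}_{\mathbf{F}}$; and varying the coefficients $\partial Q_\mu/\partial u^\alpha_j$ yields $\mathrm{pr}\ \mathbf{v}_{\mathbf{F}}(\mathrm{D}_{\mathbf{Q}})$, the operator obtained by letting $\mathrm{pr}\ \mathbf{v}_{\mathbf{F}}$ act on the entries of $\mathrm{D}_{\mathbf{Q}}$. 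On the other side, $(\mathrm{D}_{\mathbf{F}}(\mathbf{Q}))_\mu=\sum_{\nu,j}(\partial F_\mu/\partial u^\nu_j)\,D_x^j(Q_\nu)$; varying the $D_x^j(Q_\nu)$ gives $\mathrm{D}_{\mathbf{F}}\circ\mathrm{D}_{\mathbf{Q}}$, while varying the coefficients $\partial F_\mu/\partial u^\nu_j$ gives $\mathrm{pr}\ \mathbf{v}_{\mathbf{Q}}(\mathrm{D}_{\mathbf{F}})$. Equating the two results and moving $\mathrm{D}_{\mathbf{F}}\circ\mathrm{D}_{\mathbf{Q}}$ to the left turns $\mathrm{D}_{\mathbf{Q}}\circ\mathrm{D}_{\mathbf{F}}-\mathrm{D}_{\mathbf{F}}\circ\mathrm{D}_{\mathbf{Q}}$ into $[\mathrm{D}_{\mathbf{Q}},\mathrm{D}_{\mathbf{F}}]$, which is exactly (\ref{podm}).

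The proof is structurally routine; the only place demanding care is the bookkeeping in the two ``coefficient-variation'' contributions, where one must keep the dependence on the jet variables through the coefficients of $\mathrm{D}_{\mathbf{F}}$ (resp.\ $D_t$) separate from the dependence through $\mathbf{Q}$, and then check that the former genuinely reassembles into $\mathrm{pr}\ \mathbf{v}_{\mathbf{Q}}(\mathrm{D}_{\mathbf{F}})$ (resp.\ $\mathrm{pr}\ \mathbf{v}_{\mathbf{F}}(\mathrm{D}_{\mathbf{Q}})$) with the prolonged vector field acting on the coefficients and the operators $D_x^j$ standing in the correct order. This identification rests on the symmetry of the second partial derivatives $\partial^2 F_\mu/(\partial u^\beta_\ell\,\partial u^\nu_j)$; once it is made, no further structural input is required and (\ref{podm}) follows.
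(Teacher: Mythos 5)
Your argument is correct. Note, however, that the paper itself gives no proof of this lemma: it is quoted from the reference \cite{complete_lists} (Mikhailov--Shabat--Yamilov), so there is no in-paper argument to compare against. What you supply is the standard derivation from the symmetry-approach literature: apply the linearization $\mathbf{P}\mapsto\mathrm{D}_{\mathbf{P}}$ to both sides of the identity (\ref{defsym}), using $\mathrm{D}_{\partial\mathbf{P}/\partial t}=\partial(\mathrm{D}_{\mathbf{P}})/\partial t$ and $\mathrm{D}_{D_x^j(\mathbf{P})}=D_x^j\circ\mathrm{D}_{\mathbf{P}}$, and split each side into the ``operator-composition'' part and the ``coefficient-variation'' part. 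Your bookkeeping is right: on the left the three contributions are $\partial(\mathrm{D}_{\mathbf{Q}})/\partial t$, $\mathrm{pr}\ \mathbf{v}_{\mathbf{F}}(\mathrm{D}_{\mathbf{Q}})$ and $\mathrm{D}_{\mathbf{Q}}\circ\mathrm{D}_{\mathbf{F}}$; on the right they are $\mathrm{pr}\ \mathbf{v}_{\mathbf{Q}}(\mathrm{D}_{\mathbf{F}})$ and $\mathrm{D}_{\mathbf{F}}\circ\mathrm{D}_{\mathbf{Q}}$, with the identification of the coefficient-variation terms indeed resting on the equality of mixed partials $\partial^2 F_\mu/(\partial u^\beta_\ell\,\partial u^\nu_j)=\partial^2 F_\mu/(\partial u^\nu_j\,\partial u^\beta_\ell)$; moving $\mathrm{D}_{\mathbf{F}}\circ\mathrm{D}_{\mathbf{Q}}$ across yields (\ref{podm}). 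Your observation that (\ref{defsym}) is an off-shell identity of differential functions, so that linearizing both sides is legitimate with no on-shell subtlety, is exactly the point that makes the argument clean.
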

\begin{pz}
Note that from the computational point of view the necessary condition (\ref{podm}) on the symmetry $\mathbf{Q}=(Q_1, Q_2)^T$ for (\ref{evsyst}) is  very often much more helpful than the corresponding relation from the very definition of a symmetry,
see e.g. \cite{complete_lists} for details.

\end{pz}
A \textit{conservation law} for (\ref{evsyst}) is (see e.g.\ \cite{complete_lists, Olver})
the equation $$D_t(\rho(x,t,\mathbf{u}^{(l)}))=D_x(\sigma(x,t,\mathbf{u}^{(s)})),$$ which holds identically.
The differential function $\rho=\rho(x,t,\mathbf{u}^{(l)})$ (resp.\ $\sigma=\sigma(x,t,\mathbf{u}^{(s)})$) is called \textit{the density} (resp.\ the \textit{flux}) of the conservation law in question.

A functional $\mathcal{F}=\int f \mathrm{d}x$ is called {\it conserved} if $f$  is a density of a conservation law for (\ref{evsyst}).

A \textit{cosymmetry} for (\ref{evsyst}) is (see e.g.\ \cite{Blaszak}) a $q$-component differential function $\mathbf{G}$ which satisfies the condition
$$D_t(\mathbf{G})+\mathrm{D}_{\mathbf{F}}^{\ast}(\mathbf{G})=0,$$
where $\left(\mathrm{D}_{\mathbf{F}}^{\ast}\right)_{\mu\nu}=\sum_{i=0}(-1)^iD_x^i\circ\left(\frac{\partial F_{\nu}}{\partial u_i^{\mu}}\right)$ is the formal adjoint of $\mathrm{D}_{\mathbf{F}}$. It is well known (see e.g.\ \cite{Blaszak}) that if $\rho$ is a conservation law density for (\ref{evsyst}), then $\frac{\delta \rho}{\delta \mathbf{u}}$ is a cosymmetry for (\ref{evsyst}).

\section{Main results}
Our goal in this section is to show that the system (\ref{syst}) with $d\neq 0$ has no generalized symmetries (including the time-dependent ones) of order greater than eight (in fact, for the cases $d\neq 1/7$ we find upper bounds for the order of symmetries lower than eight, see Proposition \ref{prp_sym}), and, based on this result, to describe all generalized symmetries and nontrivial conservation laws for (\ref{syst}). Note that since the coefficients at the highest-order derivatives in (\ref{syst}) are constant, it is impossible, in contrast with \cite{vodova}, to apply the method of \cite{sergyeyev} for establishing the nonexistence of higher-order symmetries for (\ref{syst}), so we employ a more direct approach based on Lemma~\ref{gs}.
\begin{prp}\label{prp_sym}
The following assertions hold:
\begin{enumerate}
\item[(i)]{The Majda--Biello system (\ref{syst}) for $d\neq 0,1,1/7$ has no generalized symmetries (including time-dependent ones) of order greater than six.}
\item[(ii)]{The Majda--Biello system (\ref{syst}) for $d=1/7$ has no generalized symmetries (including time-dependent ones) of order greater than eight.}
\item[(iii)]{
The Majda--Biello system (\ref{syst}) for $d=1$ has no generalized symmetries (including time-dependent ones) of order greater than five.}
\end{enumerate}\end{prp}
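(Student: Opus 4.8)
The plan is to exploit the necessary condition (\ref{podm}) of Lemma~\ref{gs} directly, as announced at the beginning of the section. Suppose $\mathbf{Q}=(Q_1,Q_2)^T$ is a generalized symmetry of (\ref{syst}) of order $k$, and write its Fr\'echet derivative as $\mathrm{D}_{\mathbf{Q}}=\sum_{i=0}^{k}a_i D_x^i$, where each $a_i$ is a $2\times 2$ matrix whose entries $\partial Q_\mu/\partial u^\nu_i$ are differential functions of differential order at most $k$, and $a_k\not\equiv 0$. For (\ref{syst}) one has $\mathrm{D}_{\mathbf{F}}=\Lambda D_x^3+B_1D_x+B_0$, where $\Lambda=\mathrm{diag}(d,1)$ and $B_1$, $B_0$ are matrices whose entries are linear in $u,v$ and in $u_1,v_1$ respectively. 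Substituting these expressions into (\ref{podm}) and equating the coefficients at the individual powers $D_x^m$ turns (\ref{podm}) into a hierarchy of matrix equations on the $a_i$, which I would solve from the top power $m=k+3$ downwards.

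The top equations are essentially algebraic or first order and pin down the leading coefficients. At $m=k+3$ only the commutator $[\mathrm{D}_{\mathbf{Q}},\mathrm{D}_{\mathbf{F}}]$ contributes, giving $[a_k,\Lambda]=0$; since $d\neq 0,1$, the matrix $\Lambda$ has distinct eigenvalues, so $a_k$ is diagonal. At $m=k+2$, splitting the resulting equation into diagonal and off-diagonal parts and using invertibility of $\Lambda$ yields $D_x(a_k)=0$ (so the two diagonal entries of $a_k$ depend on $t$ alone) together with $[a_{k-1},\Lambda]=0$, hence $a_{k-1}$ is again diagonal. Proceeding to $m=k+1,\,k,\,k-1,\dots$, the terms $\mathrm{pr}\,\mathbf{v}_{\mathbf{F}}(\mathrm{D}_{\mathbf{Q}})$, $\partial_t(\mathrm{D}_{\mathbf{Q}})$ and the lower-order part $B_1D_x+B_0$ of $\mathrm{D}_{\mathbf{F}}$ enter; each new equation expresses the off-diagonal part of the next coefficient $a_{k-j}$ through the already-known ones up to a matrix of ``integration constants'' (functions of $x$ and $t$), and simultaneously imposes solvability conditions — linear ODEs in $t$ and linear ODEs in $x$ — on the previously determined coefficients.

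After descending a bounded number of orders these solvability conditions become overdetermined, and their mutual compatibility is governed by a few arithmetic relations between $d$ and the index $k$. Tracking these relations should show that they can all be satisfied only for small $k$: generically, i.e.\ for $d\neq 0,1,1/7$, one is driven to $k\le 6$; the exceptional value $d=1/7$ makes one coefficient in the chain vanish, so the obstruction is postponed by two steps and only $k\le 8$ follows; and at $d=1$, where the diagonalisation step at $m=k+3$ is vacuous and one must run a parallel but separate bookkeeping using $\Lambda=I$, one even obtains $k\le 5$. Collecting the three cases gives Proposition~\ref{prp_sym}.

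The main obstacle is the length and delicacy of this direct computation: one must carry the off-diagonal mixing produced by $B_1$, $B_0$ and by the commutators through roughly eight orders, keep track of the explicit $x$- and $t$-dependence of the integration constants (which a priori could enter polynomially), and correctly isolate the parameter-dependent resonances that separate the three cases. A secondary subtlety is that being ``of order $k$'' only requires one component $Q_\alpha$ to have differential order $k$, so $a_k$ need not be invertible and the argument must also cover the case in which just one diagonal entry of $a_k$ is nonzero. Finally, the value $d=1$ genuinely requires its own treatment, since the first reduction — diagonality of $a_k$ — is not available there.
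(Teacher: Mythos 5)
Your plan is, in structure, exactly the paper's proof: start from the necessary condition (\ref{podm}) of Lemma~\ref{gs}, observe that the right-hand side $\mathrm{pr}\,\mathbf{v}_{\mathbf{Q}}(\mathrm{D}_{\mathbf{F}})$ is a first-order operator so that all coefficients at $D_x^m$, $m\geq 2$, on the left must vanish, and then descend from $D_x^{k+3}$. Your top-order conclusions ($[a_k,\Lambda]=0$, hence $a_k$ diagonal for $d\neq 0,1$; then $D_x(a_k)=0$ and $a_{k-1}$ diagonal) coincide with what the paper obtains for $P_k$ and $P_{k-1}$; the paper's symbolic transform ($D_x^j\mapsto z^j$) is only a bookkeeping device and does not change the substance, and your remarks about the non-invertibility of $a_k$ and the separate treatment of $d=1$ are also consonant with the paper.

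The gap is that the entire quantitative content of the proposition --- the bounds $6$, $8$, $5$ and the exceptional values $d=1/7$ and $d=1$ --- lives precisely in the step you leave as ``tracking these relations should show that they can all be satisfied only for small $k$.'' Two things are missing. First, the concrete solvability criterion: at each level one obtains an equation $D_x\bigl((P_{k-j})_{\mu\nu}\bigr)=f$ with $f$ a known differential function, and the obstruction to solving it within the class of differential functions is the exactness of $f$ as a total $x$-derivative, i.e.\ $\delta f/\delta u^\alpha=0$; this is not a ``linear ODE in $x$'' but a variational-derivative condition, and it is the engine of the whole argument. Second, the explicit evaluation of these obstructions: the paper computes, for instance, $\delta f_1/\delta u^1=-k(7d-1)\phi_1(t)\,u^1u^1_1/\bigl(27(d-1)d^3\bigr)$ at the level $z^{k-4}$, which for $k\geq 7$ and $d\neq 0,1,1/7$ forces $\phi_1(t)=0$ (and similarly $\phi_2(t)=0$), so $\mathbf{Q}$ has order $k-1$, a contradiction; the factor $7d-1$ is exactly where $d=1/7$ enters, and for that value one must descend two further levels (to $z^{k-6}$) before a nonvanishing obstruction appears, whence the weaker bound $8$. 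Without carrying out these computations (in practice with computer algebra) the case split and the numerical bounds remain conjectured rather than proved, so the proposal as written does not yet establish the statement.
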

To simplify notation, 
throughout the proof the vector $\mathbf{u}=(u,v)^T$ from the system (\ref{syst}) will be denoted by $(u^1,u^2)^T$.
\begin{proof}
First we derive a general formula (formula (\ref{polynom})) that follows from Lemma \ref{gs} and which comes in more handy in course of the proof. To this end, suppose that there is a symmetry $\mathbf{Q}=(Q_1,Q_2)^T$ of order $k\geq k_0>0$ for the system (\ref{syst}). Then, according to Lemma~\ref{gs}, the condition
\begin{equation}\label{podminka}\frac{\partial}{\partial t}(\mathrm{D_{\mathbf{Q}}})+\mathrm{pr}\ \mathbf{v}_{\mathbf{F}}(\mathrm{D_{\mathbf{Q}}})+\left[\mathrm{D_{\mathbf{Q}}},\mathrm{D_{\mathbf{F}}}\right]=\mathrm{pr}\ \mathbf{v}_{\mathbf{Q}}(\mathrm{D_{\mathbf{F}}}),\end{equation}
 where $\mathbf{F}=(F_1,F_2)^T=( du^1_{3}-u^2u_1-u ^1u^2_1,u^2_{3}-u^1u^1_1)^T$, holds identically. It can be readily verified that the operator on the right-hand side of (\ref{podminka}) is a first-order matrix differential operator,
$$\mathrm{pr}\ \mathbf{v}_{\mathbf{Q}}(\mathrm{D_{\mathbf{F}}})=\left(\begin{array}{cc}-D_x(Q^2)-Q^2D_x&-D_x(Q^1)-Q^1D_x\\-D_x(Q^1)-Q^1D_x&0\end{array}\right).$$
 Therefore, all the  matrix coefficients at $D_x^m$ on the left-hand side of (\ref{podminka}) with $m\geq 2$ must be equal to zero; we will write this as
\begin{equation}\label{podm2}\frac{\partial}{\partial t}(\mathrm{D_{\mathbf{Q}}})+\mathrm{pr}\ \mathbf{v}_{\mathbf{F}}(\mathrm{D_{\mathbf{Q}}})+\left[\mathrm{D_{\mathbf{Q}}},\mathrm{D_{\mathbf{F}}}\right]=0\ \mbox{mod}\ D_x.\end{equation}

In order to obtain some information about $Q_1$ and $Q_2$,
let us equate the coefficients at  $D_x^{k+3}$, $D_x^{k+2}$, $\dots$, $D_x^{k-k_0+3}$ on the left-hand side of this equation to zero. To simplify computations we shall
use the so-called symbolic transform (see e.g.\ \cite{folland,Gel'fand, sw} and references therein), i.e., we replace a matrix differential operator $\mathfrak{R}=\sum_{j}R_j[\mathbf{u}]D_x^j$ with the formal series $\widetilde{ \mathfrak{R}}=\sum_{j}R_j[\mathbf{u}]z^j$ and employ the formula  \begin{equation}\label{mult}
\widetilde{\mathfrak{R}}\cdot\widetilde{\mathfrak{S}}=\sum_{i=0}^{\infty}\frac{1}{i\mathrm{!}}\frac{\partial^i\widetilde{\mathfrak{R}}}{\partial z^i}D_x^i(\widetilde{\mathfrak{S}})
\end{equation}
for the multiplication of two formal series $\widetilde{\mathfrak{R}}$, $\widetilde{\mathfrak{S}}$ corresponding to differential operators $$\mathfrak{R}=\sum_{i=-\infty}^{l}R_i[\mathbf{u}]D_x^i,\quad \mathfrak{S}=\sum_{i=-\infty}^{m}S_i[\mathbf{u}]D_x^i.$$ 
The formula (\ref{mult}) gives a formal series $\widetilde{\mathfrak{R}}\cdot\widetilde{\mathfrak{S}}$ that corresponds 
to a (pseudo)differential operator $\mathfrak{R}\circ\mathfrak{S}$:
Note that in  (\ref{mult}) we have $D_x^i\left(\sum_{j}R_j[\mathbf{u}]z^j\right)\equiv\sum_{j}D_x^i\left(R_j[\mathbf{u}]\right)z^j$, where $\left(D_x^i(R_j[\mathbf{u}])\right)_{\mu\nu}=D_x^i((R_j[\mathbf{u}])_{\mu\nu})$.

Upon applying the transform in question to (\ref{podm2}) the task of equating the coefficients at  $D_x^{k+3}$, $D_x^{k+2}$, $\dots$, $D_x^{k-k_0+3}$ on the left-hand side of (\ref{podm2}) to zero reduces to that of equating to zero the coefficients at $z^{k+3},z^{k+2},\dots z^{k-k_0+3}$ in
\begin{equation}\label{polynom}\sum_{i=k-k_0}^{k}D_t\left(P_i[\mathbf{u}]\right)z^i+\sum_{i=k-k_0}^{k}P_i[\mathbf{u}]\sum_{j=0}^{k_0}\frac{1}{j\mathrm{!}}\frac{\partial^j z^i}{\partial z^j}D_x^j(\widetilde{\mathrm{D}_{\mathbf{F}}})-
\sum_{i=k-k_0}^{k}P_i[\mathbf{u}]\sum_{j=0}^3\frac{1}{j\mathrm{!}}\frac{\partial^j \widetilde{\mathrm{D}_{\mathbf{F}}}}{\partial z^j}D_x^j(P_i[\mathbf{u}])z^i,\end{equation}
where $P_i[\mathbf{u}]=\left(\begin{array}{cc}\frac{\partial Q_{1}}{\partial u^1_i}&\frac{\partial Q_{1}}{\partial u^2_i}\\ \frac{\partial Q_{2}}{\partial u^1_i}&\frac{\partial Q_{2}}{\partial u^2_i}\end{array}\right)$.
The formula (\ref{polynom}) is just a polynomial in $z$ with matrix coefficients, so that the task of equating its coefficients to zero can be performed e.g.\ using computer algebra software.\looseness=-1

We will now employ (\ref{polynom}) to prove the statements (i)--(iii) from Proposition \ref{prp_sym}.

To prove (i), suppose that there is a symmetry $\mathbf{Q}=(Q_1,Q_2)^T$ of order $k\geq 7$ for the system (\ref{syst}) with $d\neq 0,1/7,1$. Then the coefficients at $z^{k+3},z^{k+2},\dots,z^{k-4}$ in the polynomial (\ref{polynom}) must be equal to zero.
For instance, the coefficients at $z^{k+3}$ and $z^{k+2}$ are equal to $$\hspace*{-1mm}\left(\begin{array}{cc}0& (1-d) (P_{k})_{12}\\
(d-1)(P_{k})_{21} &0
\end{array}\right)\mbox{ and }\left(\begin{array}{cc}-3d D_x\left( (P_{k})_{11}\right)& (1-d) (P_{k-1})_{12}\\
(d-1)(P_{k-1})_{21} &-3D_x\left( (P_{k})_{22}\right)
\end{array}\right),$$ respectively. Since $d\neq 0,1$ by assumption, we obtain $$P_{k}[\mathbf{u}]=\left(\begin{array}{cc}\phi_1(t)&0\\
0&\phi_2(t)\end{array}\right).$$
The coefficients $P_{k-1}[\mathbf{u}],\dots, P_{k-5}[\mathbf{u}]$ can be recursively computed in a similar way by equating the coefficients at $z^{k+3},\dots,z^{k-3}$ to zero. Finally, if we equate the coefficient at $z^{k-4}$ to zero, we obtain the conditions $$D_x\left((P_{k-6})_{11}\right)=f_1,\quad D_x\left((P_{1+r})_{22}\right)=f_2,$$
where $f_1$ and $f_2$ are quite complicated differential functions whose explicit form we omit.

The necessary condition for the above equations to hold is (cf.\ e.g.\ \cite{Olver}) the vanishing of variational derivatives of $f_1$ and $f_2$:
\begin{equation}\label{var}
\delta f_i/\delta u^1=0,\quad \delta f_i/\delta u^2=0,\quad i=1,2.
\end{equation}
However, it turns out that we have, in particular,
\[
\delta f_1/\delta u^1=-k (7d-1)\phi_1(t) u^1 u^1_1/(27(d-1)d^3),\quad \delta f_2/\delta u^2=-2k \phi_2(t) u^1 u^1_1/(9(d-1)).
\]
As $k\neq 0$ by assumption, we see that for $d\neq 0,1,1/7$ the necessary conditions for (\ref{var}) to hold are $\phi_1(t)=0$ and $\phi_2(t)=0$.
But then $\mathbf{Q}=(Q_1,Q_2)^T$ is a symmetry of order $k-1$, which is a contradiction, and the result follows.

To prove (ii), suppose that there is a symmetry  $\mathbf{Q}=(Q_1,Q_2)^T$ of order $k\geq 9$ for the system (\ref{syst}) with $d=1/7$. We substitute $\mathbf{Q}$ and $k_0=9$ into the formula (\ref{polynom}) and equate the coefficients at
$z^{k+3}$, $z^{k+2}$, \dots $z^{k-6}$ to zero. For instance, the coefficients at $z^{k+3}$ and $z^{k+2}$ are equal to
$$\hspace*{-1mm}\left(\begin{array}{cc}0& \frac{6}{7} (P_{k})_{12}\\
-\frac{6}{7}(P_{k})_{21} &0
\end{array}\right)\mbox{ and }\left(\begin{array}{cc}-\frac{3}{7} D_x\left( (P_{k})_{11}\right)& \frac{6}{7} (P_{k-1})_{12}\\
-\frac{6}{7}(P_{k-1})_{21} &-3D_x\left( (P_{k})_{22}\right)
\end{array}\right),$$ respectively. We obtain $$P_{7+r}[\mathbf{u}]=\left(\begin{array}{cc}\phi_1(t)&0\\
0&\phi_2(t)\end{array}\right).$$
The coefficients $P_{k-1},\dots,P_{k-5}$ can be computed in the same fashion by comparing the coefficients at $z^{k+2},z^{k+1},\allowbreak\dots z^{k-3}$ with zero. If we equate the coefficient at $z^{k-4}$ with zero, we obtain $$D_x((P_{k-6})_{22})=g,$$ where $g$ is a differential function whose explicit form we omit. One of the necessary conditions for the above equation to hold is  \begin{eqnarray*}\delta g/\delta u^1&=&\left((7/27)(k-9)+(7/3)\right)u^1u^1_1{\phi}_2=0.
\end{eqnarray*}
Since $k-9\geq 0$, we see that ${\phi}_2(t)=0$.

If we equate the coefficient at $z^{k-6}$ to zero, we obtain $$D_x((P_{k-8})_{11})=f.$$
One of the necessary conditions for this to hold is
\begin{eqnarray*}\delta f/\delta u^1&=&\left((16807/12)+(16807/107)(k-9)\right){\phi}_1u^1u^2_3+\left((6517/4)+(6517/36)(k-9)\right){\phi}_1u^1_1u^2_2\\
&&+\left((6517/4)+(6517/36)(k-9)\right){\phi}_1u^1_2u^2_1+\left((4802/9)+(4802/81)(k-9)\right)u^1u^2u^2_1{\phi}_1=0.\end{eqnarray*}
As $k-9\geq 0$ by assumption, we see that ${\phi}_1(t)=0$ must hold. But this fact together with ${\phi}_2(t)=0$ implies that $\mathbf{Q}=(Q_1,Q_2)^{T}$ is a symmetry of order $k-1$, which is a contradiction, and the result follows.

To prove (iii), suppose that there is a symmetry  $\mathbf{Q}=(Q_1,Q_2)^T$ of order $k\geq 6$ for the system (\ref{syst}) with $d=1$. We substitute $\mathbf{Q}$ and $k_0=6$ into the formula (\ref{polynom}) and equate the coefficients at
$z^{k+3}$, $z^{k+2}$, $\dots, z^{k-3}$ to zero.  In particular, the coefficient at $z^{k+3}$ is a zero matrix, the coefficient at $z^{k+2}$ is equal to
$$\hspace*{-1mm}-3\left(\begin{array}{cc}D_x((P_{k})_{11})&D_x((P_{k})_{12})\\
D_x((P_{k})_{21}) &D_x((P_{k})_{22})
\end{array}\right).$$
This yields
$$P_{k}=\left(\begin{array}{cc}{\phi}_1(t)&{\lambda}_1(t)\\
{\eta}_1(t) &{\psi}_1(t)
\end{array}\right).$$
 The coefficient at $z^{k+1}$  is then equal to $$\left(\begin{array}{cc}-3D_x((P_{k-1})_{11}))-{\lambda}_1u^ 1+{\eta}_1u^1&-{\phi}_1u^1+u^2{\lambda}_1+{\psi}_1u^1-D_x((P_{k-1})_{12})\\
-3D_x((P_{k-1})_{21}) &-3D_x\left( (P_{k-1})_{22}\right)
\end{array}\right).$$ If we equate this coefficient to zero we obtain $(P_{k-1})_{21}={\eta}_2(t)$, $(P_{k-1})_{22}={\psi}_2(t)$, $D_x((P_{k-1})_{11}))=(1/3)(-{\lambda}_1u^1+{\eta}_1u^1)$, $D_x((P_{k-1})_{12}))=(1/3)(-{\phi}_1u^1+{\lambda}_1u^2+{\psi}_1u^1)$. The necessary conditions for the two last formulas to hold are
\begin{eqnarray*}\delta ((1/3)(-{\lambda}_1u^1+{\eta}_1u^1))/ \delta u^1&=&-(1/3)({\lambda}_1-{\eta}_1)=0\\
\delta((1/3)(-{\phi}_1u^1+{\lambda}_1u^2+{\psi}_1u^1))\delta u^1&=&-(1/3)({\phi}_1-{\psi}_1)=0\\
\delta((1/3)(-{\phi}_1u^1+{\lambda}_1u^2+{\psi}_1u^1))\delta u^2&=&(1/3){\lambda}_1=0.
\end{eqnarray*}
This implies ${\lambda}_1(t)={\eta}_1(t)=0$ and ${\psi}_1(t)={\phi}_1(t)$. Hence we have $(P_{k-1})_{12}={\lambda}_2(t)$ and $(P_{k-1})_{1}={\phi}_2(t)$.

The matrices $P_{k-2}$, $P_{k-3}$, and  $P_{k-4}$ can be computed upon equating the coefficients at $z^{k}, \dots z^{k-2}$ to zero. Finally, if we equate the coefficient at $z^{k-3}$ to zero we obtain, in particular,
$$D_x((P_{k-5})_{12})=f,$$
where $f$ is a differential function whose explicit form we omit. One of the necessary conditions for the last formula to hold is
\begin{eqnarray*}
\delta f/\delta u^1&=&\left(-(1/3)-(1/18)(k-6)\right)(u^1)^2{\phi}_1+\left((1/27)(k-6)+(2/9)\right)(u^2)^2{\phi}_1+\alpha(t)=0.
\end{eqnarray*}
As $k-6\geq 0$ by assumption, we have ${\phi}_1(t)=0$, therefore also ${\psi}_1(t)=0$. But then $\mathbf{Q}=(Q_1,Q_2)^T$ is a symmetry of order $k-1$, which is a contradiction, and the result follows.\end{proof}
It follows from Proposition~\ref{prp_sym} that system (\ref{syst}) for $d\neq 0$ is not symmetry integrable in the sense of \cite{symmetries,wh_integrability,complete_lists, test}, i.e., it has no
generalized symmetries of arbitrarily high order. We intend to study the special case of $d=0$ elsewhere.

Now that we know that all generalized symmetries for the system (\ref{syst}) with $d\neq 0$ are of order not higher than eight, they can be found by straightforward computation. The result is given by the following theorem:
\begin{vt}\label{symetrie_veta}
The only generalized symmetries for the Majda--Biello system (\ref{syst}) with $d\neq 0$ are those with characteristics 
$\mathbf{Q}^1=(\frac{x}{3}u_x+\frac{2}{3}u+t(du_{xxx}-vu_x-uv_x),\frac{x}{3}v_x+\frac{2}{3}v+t(v_{xxx}-uu_x))^T$, $\mathbf{Q}^2=(du_{xxx}-vu_x-uv_x,v_{xxx}-uu_x)^T$, and
$\mathbf{Q}^3=(u_x,v_x)^T$, i.e., the scaling symmetry, $t$-translations and $x$-translations.
\end{vt}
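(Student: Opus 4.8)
By Proposition~\ref{prp_sym} we already know that every generalized symmetry of (\ref{syst}) with $d\neq 0$ has order at most eight, so the problem is reduced to a finite computation: determine all $\mathbf{Q}=(Q_1,Q_2)^T$ with $Q_\alpha=Q_\alpha(x,t,\mathbf{u}^{(k)})$, $k\leq 8$, satisfying the symmetry condition (\ref{defsym}). The plan is to exploit the same machinery used in the proof of Proposition~\ref{prp_sym}, namely Lemma~\ref{gs} together with the symbolic transform (\ref{mult}) and the resulting polynomial identity (\ref{polynom}), but now pushing the analysis all the way down rather than stopping at the first contradiction. First I would fix $k=8$ (the worst case, only occurring for $d=1/7$ but harmless to treat uniformly) and work out the matrices $P_i[\mathbf{u}]$ for $i=k,k-1,\dots$ recursively by setting to zero the coefficients at $z^{k+3},z^{k+2},\dots$ in (\ref{polynom}); the leading orders force $P_k$ to be a $t$-dependent diagonal (for $d\neq 1$) or constant-type matrix exactly as in the proof of Proposition~\ref{prp_sym}, and the vanishing of variational derivatives at the critical power of $z$ forces the top coefficient functions to be zero unless the order actually drops. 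Iterating this descent, the order of $\mathbf{Q}$ is successively reduced until one reaches order $\leq 3$.

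The second stage is the explicit solution of the symmetry determining equation for $\mathrm{ord}\,\mathbf{Q}\leq 3$. Here I would substitute the general ansatz $Q_\alpha=Q_\alpha(x,t,u,v,u_1,v_1,u_2,v_2,u_3,v_3)$ directly into (\ref{defsym}), i.e.\ $D_t(\mathbf{Q})=\mathrm{D}_{\mathbf{F}}(\mathbf{Q})$ with $\mathbf{F}=(du_{xxx}-vu_x-uv_x,\,v_{xxx}-uu_x)^T$, expand using $D_t=\partial_t+\sum_{\alpha,j}D_x^j(F_\alpha)\partial/\partial u^\alpha_j$, and split the resulting identity into equations for the coefficients of the independent jet variables $u_4,v_4,u_5,v_5,u_6,v_6,\dots$ (these appear linearly once $D_x$ hits the third-order arguments of $\mathbf{Q}$). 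This yields an overdetermined linear PDE system for the dependence of $\mathbf{Q}$ on its arguments: the highest-order split equations force $\mathbf{Q}$ to be affine in $u_3,v_3$ with constant-in-jet coefficients, then the next ones cut down the dependence on $u_2,v_2,u_1,v_1$, and finally one is left with a small linear system in $u,v,x,t$ whose solution space is spanned by the three listed characteristics. I would organize this as a finite cascade, exactly parallel to the recursion in Proposition~\ref{prp_sym} but carried to completion, and note that computer algebra can be used to verify the splitting.

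The third, essentially bookkeeping, stage is to check that the three candidate characteristics $\mathbf{Q}^1,\mathbf{Q}^2,\mathbf{Q}^3$ do satisfy (\ref{defsym}): $\mathbf{Q}^3=(u_x,v_x)^T$ and $\mathbf{Q}^2=\mathbf{F}$ are symmetries because (\ref{syst}) is autonomous in $x$ and $t$, and $\mathbf{Q}^1$ is the characteristic of the scaling $x\mapsto\lambda x,\ t\mapsto\lambda^3 t,\ u\mapsto\lambda^{-2}u,\ v\mapsto\lambda^{-2}v$ combined with a multiple of the time-translation, which one verifies by a direct substitution (or by checking invariance of (\ref{syst}) under the one-parameter scaling group and reading off the evolutionary form, using $Q^1_\alpha = $ (scaling part) $ + t F_\alpha$). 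One should also remark that these three are genuinely inequivalent (e.g.\ by their differential orders and explicit $x,t$-dependence) and that any generalized symmetry equivalent to zero has been discarded by the convention identifying symmetries with their characteristics.

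\textbf{Main obstacle.}
The conceptual content is entirely in Proposition~\ref{prp_sym}; the remaining difficulty is purely computational bulk. The hardest part will be managing the descent from order eight (or six) down to order three: at each step one must solve for a $2\times 2$ matrix $P_i[\mathbf{u}]$ from the vanishing of an increasingly complicated coefficient in (\ref{polynom}), and the intermediate differential functions (the analogues of $f_1,f_2,f,g$ in the proof of Proposition~\ref{prp_sym}) grow rapidly in size. The key to keeping this tractable is that the leading behaviour is governed by the constant-coefficient part $\mathfrak{D}_0=\mathrm{diag}(dD_x^3,D_x^3)$ of $\mathrm{D}_{\mathbf{F}}$, so the off-diagonal entries of each $P_i$ are killed immediately (using $d\neq 1$; the case $d=1$ was handled separately in Proposition~\ref{prp_sym} and needs the same separate treatment here) and the diagonal entries are forced to be quasiconstant at each stage by a total-derivative argument plus vanishing of variational derivatives, so no genuine integration of PDEs in the jet variables is required until the very last, low-order step. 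With that structure in place the computation, though lengthy, is routine and best relegated to computer algebra; I would state the intermediate results and omit the explicit expressions, exactly as the paper does for Proposition~\ref{prp_sym}.
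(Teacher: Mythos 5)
Your proposal is correct and follows essentially the same route as the paper: Proposition~\ref{prp_sym} bounds the order of any generalized symmetry, after which the determining equation (\ref{defsym}) for a finite-order ansatz becomes an overdetermined linear system solvable by direct (computer-assisted) computation, yielding exactly $\mathbf{Q}^1,\mathbf{Q}^2,\mathbf{Q}^3$. The paper compresses this entire second stage into the single remark that the symmetries ``can be found by straightforward computation,'' so your more detailed description of the descent and the final verification is a faithful (indeed more explicit) rendering of the intended argument.
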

\begin{pz}
Using (\ref{syst}) the above symmetries can be written as $\mathbf{Q}^1=(\frac{x}{3}u_x+\frac{2}{3}u+tu_t,\frac{x}{3}v_x+\frac{2}{3}v+tv_t)^T$, $\mathbf{Q}^2=(u_t,v_t)^T$, and $\mathbf{Q}^3=(u_x,v_x)^T$, which facilitates their interpretation. 
In particular, we immediately see that all generalized symmetries for (\ref{syst}) with $d\neq 0$ are equivalent to the Lie point ones.
\end{pz}

It is readily checked that there exists no conserved functional $\mathcal{K}$ associated (through the Hamiltonian operator $\mathfrak{D}$ so that $\mathbf{Q}_1=\mathfrak{D}\delta\mathcal{K}$) to the first symmetry $\mathbf{Q}_1$. The conserved functional associated to the second symmetry $\mathbf{Q}_2$ is $\mathcal{H}=-1/2\int (du_x^2+v_x^2+u^2 v)\ \mathrm{d}x$, the conservation law associated to the third symmetry $\mathbf{Q}_3$ is the energy $1/2\int (u^2+v^2)\ \mathrm{d}x.$

In particular, it follows from Theorem \ref{symetrie_veta} that the highest possible order of a generalized symmetry for (\ref{syst}) with $d\neq 0$ is at most three. We will now use this fact to find all conservation laws for the system (\ref{syst}). First of all we state the following lemma:
\begin{lm}\label{lemma}
The only cosymmetries for the Majda--Biello system (\ref{syst}) with $d\neq 0$ are $\mathbf{G}^1=(1,0)^T$, $\mathbf{G}^2=(0,1)^T$,  $\mathbf{G}^3=(u,v)^T$ and  $\mathbf{G}^4=(du_{xx}-uv,v_{xx}-\frac{1}{2}u^2)^T$.
\end{lm}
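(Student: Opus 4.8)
The plan is to exploit the Hamiltonian structure of (\ref{syst}) and reduce the classification of cosymmetries to the classification of generalized symmetries furnished by Theorem~\ref{symetrie_veta}. Since $\mathfrak{D}=\mathrm{diag}(D_x,D_x)$ has constant coefficients, $D_t$ commutes with $\mathfrak{D}$; moreover $\mathfrak{D}^{\ast}=-\mathfrak{D}$, and because $\mathbf{F}=\mathfrak{D}\,\delta\mathcal{H}$ with the Fr\'echet derivative $\mathrm{D}_{\delta\mathcal{H}}$ self-adjoint, one has $\mathrm{D}_{\mathbf{F}}=\mathfrak{D}\circ\mathrm{D}_{\delta\mathcal{H}}$ and hence the operator identity $\mathrm{D}_{\mathbf{F}}\circ\mathfrak{D}=-\mathfrak{D}\circ\mathrm{D}_{\mathbf{F}}^{\ast}$. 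Consequently, if $\mathbf{G}=(G_1,G_2)^T$ is a cosymmetry, i.e.\ $D_t(\mathbf{G})+\mathrm{D}_{\mathbf{F}}^{\ast}(\mathbf{G})=0$, then $\mathbf{Q}:=\mathfrak{D}\mathbf{G}=(D_xG_1,D_xG_2)^T$ satisfies $D_t(\mathbf{Q})=\mathfrak{D}D_t(\mathbf{G})=-\mathfrak{D}\mathrm{D}_{\mathbf{F}}^{\ast}(\mathbf{G})=\mathrm{D}_{\mathbf{F}}(\mathfrak{D}\mathbf{G})=\mathrm{D}_{\mathbf{F}}(\mathbf{Q})$, so $\mathbf{Q}$ is a generalized symmetry of (\ref{syst}). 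As $d\neq0$, Theorem~\ref{symetrie_veta} then yields $\mathbf{Q}=c_1\mathbf{Q}^1+c_2\mathbf{Q}^2+c_3\mathbf{Q}^3$ with $c_1,c_2,c_3\in\mathbb{R}$.

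Next I would use that both components of $\mathbf{Q}=\mathfrak{D}\mathbf{G}$ are total $x$-derivatives, so their variational derivatives vanish. A short computation gives $\delta Q^2_\alpha/\delta u^\beta=\delta Q^3_\alpha/\delta u^\beta=0$ for all $\alpha,\beta\in\{1,2\}$, whereas $\delta Q^1_1/\delta u^1=\tfrac{1}{3}$; hence $0=\delta Q_1/\delta u^1=c_1/3$ forces $c_1=0$. Since $\mathbf{Q}^2=\bigl(D_x(du_{xx}-uv),\,D_x(v_{xx}-\tfrac{1}{2}u^2)\bigr)^T$ and $\mathbf{Q}^3=(D_xu,D_xv)^T$, inverting $D_x$ in each component and adding back the kernel of $\mathfrak{D}$ on differential functions (which consists precisely of tuples of functions of $t$ alone) yields
\[
\mathbf{G}=c_2\mathbf{G}^4+c_3\mathbf{G}^3+(\chi_1(t),\chi_2(t))^T
\]
for some smooth functions $\chi_1,\chi_2$ of $t$.

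It then remains to feed this expression back into the cosymmetry condition. The tuples $\mathbf{G}^3=\delta\bigl(\tfrac{1}{2}\int(u^2+v^2)\,\mathrm{d}x\bigr)/\delta\mathbf{u}$ and $\mathbf{G}^4=\delta\mathcal{H}/\delta\mathbf{u}$ are variational derivatives of conservation law densities of (\ref{syst}), hence cosymmetries (this is also immediate by a direct check), so by linearity the condition reduces to $D_t(\chi_1,\chi_2)^T+\mathrm{D}_{\mathbf{F}}^{\ast}(\chi_1,\chi_2)^T=0$. Writing out $\mathrm{D}_{\mathbf{F}}^{\ast}$ one sees that each of its entries is an operator ending in $D_x$, and therefore annihilates every tuple of functions of $t$; the condition thus collapses to $\chi_1'(t)=\chi_2'(t)=0$, i.e.\ $(\chi_1,\chi_2)^T=a_1\mathbf{G}^1+a_2\mathbf{G}^2$ with $a_1,a_2\in\mathbb{R}$. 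Hence every cosymmetry is a real linear combination of $\mathbf{G}^1,\mathbf{G}^2,\mathbf{G}^3,\mathbf{G}^4$; conversely, a straightforward verification shows that each $\mathbf{G}^i$ satisfies $D_t(\mathbf{G}^i)+\mathrm{D}_{\mathbf{F}}^{\ast}(\mathbf{G}^i)=0$, which completes the proof.

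The points requiring the most care are three. First, the passage from cosymmetries to symmetries rests on the operator identity $\mathrm{D}_{\mathbf{F}}\circ\mathfrak{D}=-\mathfrak{D}\circ\mathrm{D}_{\mathbf{F}}^{\ast}$, which is exactly where the Hamiltonian property of $\mathfrak{D}$ is used (for $\mathfrak{D}=\mathrm{diag}(D_x,D_x)$ it follows at once from $D_x^{\ast}=-D_x$ together with the self-adjointness of $\mathrm{D}_{\delta\mathcal{H}}$). Second, one must establish $c_1=0$, i.e.\ that the scaling symmetry $\mathbf{Q}^1$ is not $\mathfrak{D}$ applied to any cosymmetry---equivalently, that its components are not total $x$-derivatives; this is precisely why the list of cosymmetries contains four and not five entries, in accordance with the fact, noted above, that no conserved functional is associated with $\mathbf{Q}^1$. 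Third, the $t$-dependent kernel of $\mathfrak{D}$ must be handled: it produces the Casimir cosymmetries $\mathbf{G}^1$ and $\mathbf{G}^2$, and one has to check that only constant $\chi_1,\chi_2$ yield cosymmetries. Each of these steps is short, and the remaining ingredients---the explicit form of $\mathrm{D}_{\mathbf{F}}^{\ast}$, the variational-derivative computations, and the concluding verification that $\mathbf{G}^1,\dots,\mathbf{G}^4$ are cosymmetries---are routine.
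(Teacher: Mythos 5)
Your argument is correct, and while it shares with the paper its key structural step --- that the Hamiltonian operator $\mathfrak{D}=\mathrm{diag}(D_x,D_x)$ maps cosymmetries to generalized symmetries --- the two proofs diverge after that point. The paper uses this map only to bound the order: since by Theorem~\ref{symetrie_veta} every symmetry has order at most three, every cosymmetry has order at most two, and the lemma is then finished by an unpresented ``straightforward computation'', i.e., by solving the determining equation $D_t(\mathbf{G})+\mathrm{D}_{\mathbf{F}}^{\ast}(\mathbf{G})=0$ for a general second-order ansatz $\mathbf{G}(x,t,\mathbf{u}^{(2)})$. You instead exploit the full strength of Theorem~\ref{symetrie_veta}: you identify $\mathfrak{D}\mathbf{G}$ with an explicit linear combination $c_1\mathbf{Q}^1+c_2\mathbf{Q}^2+c_3\mathbf{Q}^3$, kill $c_1$ by the variational-derivative test (the components of the scaling symmetry are not total $x$-derivatives, consistent with the paper's remark that no conserved functional is associated with $\mathbf{Q}^1$), invert $D_x$, and pin down the kernel contribution $(\chi_1(t),\chi_2(t))^T$ by substituting back into the cosymmetry condition. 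This replaces the brute-force computation with three short conceptual steps and makes transparent why exactly four cosymmetries appear: two arise from $\mathbf{Q}^2$ and $\mathbf{Q}^3$ via $D_x^{-1}$ and two are the Casimir-type elements of $\ker\mathfrak{D}$. The individual claims you rely on all check out: $\mathrm{D}_{\mathbf{F}}\circ\mathfrak{D}=-\mathfrak{D}\circ\mathrm{D}_{\mathbf{F}}^{\ast}$ follows from the Helmholtz self-adjointness of $\mathrm{D}_{\delta\mathcal{H}}$ together with $\mathfrak{D}^{\ast}=-\mathfrak{D}$; indeed $\delta Q^1_1/\delta u^1=\tfrac{1}{3}$; each entry of $\mathrm{D}_{\mathbf{F}}^{\ast}$ reduces to an operator composed with $D_x$ on the right and hence annihilates tuples of functions of $t$; and $\mathbf{G}^3$, $\mathbf{G}^4$ are verified to be cosymmetries directly, which is the right move to avoid circularity since Theorem~\ref{zakony_veta} on conservation laws is proved only after the lemma.
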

\begin{proof}
In order to find all cosymmetries for the system (\ref{syst}), we have to find the maximal order of all cosymmetries. To this end we make use of the fact that our system is Hamiltonian with the Hamiltonian operator $\mathfrak{D}=\left(\begin{array}{cc}D_x&0\\0&D_x\end{array}\right)$ and $\mathfrak{D}$ maps cosymmetries to symmetries. So, let $\mathbf{G}=(G_1,G_2)^T$ be a cosymmetry for (\ref{syst}). Then the 2-tuple $\mathfrak{D}(\mathbf{G})=(D_x(G_1),D_x(G_2))^T$ is a symmetry. According to the previous results, the highest possible order of  $\mathfrak{D}(\mathbf{G})$ is equal to 3. Therefore, $\mathbf{G}=(G_1,G_2)^T$ must be of differential order at most two. Now, it is just a matter of straightforward computation to find all cosymmetries for (\ref{syst}).\looseness=-1
\end{proof}
\begin{vt}\label{zakony_veta}
The only linearly independent conservation laws for the Majda--Biello system $(\ref{syst})$  with $d\neq 0$ are, modulo the addition of trivial conservation laws, of the form
$D_t(\rho)=D_x(\sigma)$, where 
$\rho$ (resp.\ $\sigma$) are linear combinations of
$\rho_i$ (resp.\ $\sigma_i$), $i=1,\dots,4$, given by the formulas
$$\begin{array}{llllll}
\rho_1&=&u,&\sigma_1&=&du_{xx}-uv,\\[5mm]
\rho_2&=&v,&\sigma_2&=&v_{xx}-\frac{1}{2}u^2\\[5mm]
\rho_3&=&u^2+v^2,&\sigma_3&=&2duu_{xx}-2u^2v-du_x^2+2vv_{xx}-v_x^2\\[5mm]
\rho_4&=&du_x^2+v_x^2+vu^2,\ \ &\sigma_4&=&2v_x(v_{xxx}-uu_x)+2du_x( du_{xxx}-vu_x-uv_x)-v_{xx}^2-d^2u_{xx}^2\\[5mm]
&&&&&+u^2v_{xx}+2duvu_{xx}-\frac{1}{4}u^2(u^2+v^2).
\end{array}$$
\end{vt}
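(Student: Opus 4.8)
The plan is to reduce the classification of conservation laws to the already-completed classification of cosymmetries in Lemma~\ref{lemma}, and then to check directly which cosymmetries actually arise as variational derivatives of conserved densities. Recall that if $D_t(\rho) = D_x(\sigma)$ then $\delta\rho/\delta\mathbf{u}$ is a cosymmetry for (\ref{syst}); moreover two conservation laws with the same (up to scalar multiple and addition of cosymmetries coming from trivial densities) variational derivative differ by a trivial conservation law, since $\delta\rho/\delta\mathbf{u}=0$ forces $\rho=D_x(\text{something})$. Hence the space of nontrivial conservation laws injects into the space of cosymmetries of the form $\delta\rho/\delta\mathbf{u}$. By Lemma~\ref{lemma} the latter lives inside the four-dimensional span of $\mathbf{G}^1=(1,0)^T$, $\mathbf{G}^2=(0,1)^T$, $\mathbf{G}^3=(u,v)^T$, $\mathbf{G}^4=(du_{xx}-uv,\,v_{xx}-\tfrac12 u^2)^T$.

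The first step is therefore to exhibit, for each $\mathbf{G}^i$, an explicit conserved density $\rho_i$ with $\delta\rho_i/\delta\mathbf{u}$ proportional to $\mathbf{G}^i$, together with the corresponding flux $\sigma_i$; these are exactly the four pairs listed in the statement. One checks $\delta\rho_1/\delta\mathbf{u}=(1,0)^T=\mathbf{G}^1$, $\delta\rho_2/\delta\mathbf{u}=(0,1)^T=\mathbf{G}^2$, $\delta\rho_3/\delta\mathbf{u}=(2u,2v)^T=2\mathbf{G}^3$, and $\delta\rho_4/\delta\mathbf{u}=(-2du_{xx}+2uv,\,-2v_{xx}+u^2)^T=-2\mathbf{G}^4$ (here $\rho_4=-2\mathcal{H}$ up to the overall sign, so this is just the Hamiltonian density). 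Then one verifies by a direct computation — differentiating $\rho_i$ by $D_t$ using (\ref{syst}) to eliminate $u_t,v_t$, and comparing with $D_x(\sigma_i)$ — that each $D_t(\rho_i)=D_x(\sigma_i)$ holds identically. This part is entirely routine polynomial bookkeeping in the jet variables and is best left to computer algebra; the $\sigma_i$ in the statement are the outcome.

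The second step is the converse: every nontrivial conservation law is, modulo trivial ones, a linear combination of the $\rho_i$. Given an arbitrary conserved $\rho$, its variational derivative $\delta\rho/\delta\mathbf{u}$ is by Lemma~\ref{lemma} a constant-coefficient linear combination $\sum c_i\mathbf{G}^i$ (the $c_i$ a priori functions of $t$; but since $\rho$ is conserved one checks the $c_i$ must be constant — or one simply notes that $\partial_t\mathbf{G}^i$ must again be a cosymmetry, forcing $\dot c_i$-terms to cancel). Since $\delta(\sum c_i\rho_i)/\delta\mathbf{u}=\sum c_i\,\delta\rho_i/\delta\mathbf{u}$ equals a nonzero multiple-matching combination of the same $\mathbf{G}^i$, after rescaling the $c_i$ appropriately we get $\delta(\rho-\sum c_i\rho_i)/\delta\mathbf{u}=0$. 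A differential function annihilated by the variational derivative is a total $x$-derivative (this is the standard exactness fact for the variational complex, cf.\ \cite{Olver}), hence $\rho-\sum c_i\rho_i=D_x(\zeta)$ for some differential function $\zeta$, i.e.\ $\rho$ differs from a combination of the $\rho_i$ by a trivial density. Linear independence of the four conservation laws follows from linear independence of $\mathbf{G}^1,\dots,\mathbf{G}^4$ together with the observation that no nonzero combination of $\rho_1,\dots,\rho_4$ is a total derivative (again because their variational derivatives are linearly independent).

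The only genuine obstacle in this argument is the input Lemma~\ref{lemma}, whose proof in turn rests on Proposition~\ref{prp_sym} and Theorem~\ref{symetrie_veta} bounding the order of symmetries — and we are entitled to assume all three. Granting those, the proof of Theorem~\ref{zakony_veta} is essentially mechanical: the substantive content is the order bound, and what remains is (a) writing down the four densities and fluxes and checking $D_t\rho_i=D_x\sigma_i$, and (b) invoking exactness of the variational complex to pass from "cosymmetry" back to "conservation law modulo triviality." I would carry it out in exactly that order: first produce and verify the four explicit conservation laws, then run the cosymmetry-to-density reduction to show there are no others.
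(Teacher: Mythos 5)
Your proposal is correct and follows essentially the same route as the paper: both reduce the classification to the cosymmetries of Lemma~\ref{lemma} via the map $\rho\mapsto\delta\rho/\delta\mathbf{u}$ (which, as you note, kills exactly the trivial densities) and then recover the four densities and verify the fluxes by direct computation. The only, immaterial, organizational difference is in the last step: the paper first reduces $\rho$ to a function of $x,t,u,v,u_x,v_x$ and solves the resulting inverse variational system (\ref{rho}) for $\rho$, whereas you exhibit the four densities $\rho_i$ up front, match their variational derivatives with the $\mathbf{G}^i$, and invoke exactness of the variational complex ($\ker\delta/\delta\mathbf{u}=\mathrm{im}\,D_x$) to conclude nothing else survives modulo trivial densities.
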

\begin{proof}
Let $\rho$ be a conservation law density for (\ref{syst}). Then the 2-tuple $\frac{\delta\rho}{\delta \mathbf{u}}=(\frac{\delta\rho}{\delta u},\frac{\delta\rho}{\delta v})$ is a cosymmetry for (\ref{syst}). From Lemma \ref{lemma} it follows that $\mathrm{ord}\ \frac{\delta\rho}{\delta u}\leq 2$ and $\mathrm{ord} \frac{\delta\rho}{\delta v}\leq 2$, which means that up to the addition of a trivial density $\rho$ is a function of $x,t,u,v,u_x,v_x$ only. In order to find all conserved densities for (\ref{syst}) we now need to find all solutions to the system
\begin{eqnarray}\label{rho}
 \frac{\delta\rho(x,t,u,v,u_x,v_x)}{\delta u}&=&a_4(du_{xx}-uv)+a_3 u+a_1,\\
 \frac{\delta\rho(x,t,u,v,u_x,v_x)}{\delta v}&=&a_4\left(v_{xx}-\frac{1}{2}u^2\right)+a_3 v+a_2,
\end{eqnarray}
where the right-hand sides are linear combinations of the components of cosymmetries for (\ref{syst}), so $a_i$ are arbitrary constants.

The general solution of (\ref{rho}) modulo the trivial conserved density
is
\begin{eqnarray*}\rho(x,t,u,v,u_x,v_x)&=& \sum\limits_{i=1}^4 a_i \rho_i,
\end{eqnarray*}
and the result follows. \looseness=-1
%
The fluxes $\sigma_i$ are easy to find
by straightforward computation.
\end{proof}
The conserved functionals $\int \rho_i \mathrm{d}x$, $i=1,2$, corresponding to the first and second conserved densities $\rho_1$ and $\rho_2$ are Casimir functionals corresponding to the Hamiltonian operator $\mathfrak{D}$. They were used in \cite{majda1} for the construction of a perturbed system for (\ref{syst}). The conserved functional $\int \rho_3 \mathrm{d}x$ corresponding to the third conserved density $\rho_3$ is the energy, i.e., the integral of motion associated with the invariance under the time shifts.
The conserved functional corresponding to the fourth conserved density is just the Hamiltonian functional $\mathcal{H}$ which generates the dynamics of (\ref{syst}). Note that both equations of (\ref{syst}) have the form of conservation laws, so the system (\ref{syst}) is written in the normal form in the sense of \cite{ps}.

The above conservation laws of (\ref{syst}) can be employed e.g.\ for the stability analysis of (\ref{syst}) and
for the construction of nonlocal symmetries and nonlocal conservation laws for (\ref{syst}) through introduction of the associated potentials, cf.\ e.g.\ \cite{bocharov, kunzinger} and references therein. Also, the knowledge of all symmetries and conservation laws for (\ref{syst}) enables one to construct higher-precision discretizations of (\ref{syst}), cf.\ e.g.\ \cite{lw}.

\section*{Acknowledgements}
The author thanks Dr. A. Sergyeyev for stimulating discussions and Prof. I.
 S. Krasil'shchik for helping her to gain a deeper understanding of the
 subject. This research was in part supported by the fellowship from the
 Moravian-Silesian region and the institutional support for I\v{C}47813059.
\looseness=-1


\begin{thebibliography}{99}
\bibitem{majda3} J. Biello, A. J. Majda, \textit{The effect of meriditional and vertical shear on the interaction of equatorial baroclinic and barotropic Rossby waves}, Stud. Appl. Math. 112 (2004), 341--390.
\bibitem{Blaszak} M. Blaszak, \textit{Multi-Hamiltonian Theory of Dynamical Systems}, Springer, Berlin etc., 1998.
\bibitem{bocharov} A. Bocharov et al, \textit{Symmetries and Conservation Laws for Differential Equations of Mathematical Physics}, AMS, Providence, R1, 1999.
\bibitem{burde}G. I. Burde, A. Sergyeyev, \textit{Ordering of two small parameters in the shallow water wave problem}, J. Phys. A: Math. Theor. 46 (2013), paper 075501, arXiv:1301.6672.
\bibitem{Kac} A. De Sole, V. Kac, M. Wakimoto, \textit{On classification of Poisson vertex algebras}, Transformation Groups  15 (2010), 883--907, arXiv:1004.5387.
\bibitem{dickey} L. A. Dickey, \textit{Lectures on classical W-algebras}, Acta Appl. Math. 47 (1997), 243--321.
\bibitem{ds}V. G. Drinfeld, V.V. Sokolov, {\it Lie algebras and equations of Korteweg--de Vries type}, J. Sov. Math. 30 (1985), 1975--2036.
\bibitem{folland} G. B. Folland, \textit{Introduction to partial differential equations}, Princeton University Press, 1995.
\bibitem{Gel'fand} I. M. Gel'fand, L. A. Dikii, \textit{Fractional powers of operators and Hamiltonian systems}, Func. Anal. Appl. 10 (1976), 259--273.
\bibitem{hoskins} B. J. Hoskins, F.-F. Jin, \textit{The initial value problem for tropical perturbations to a baroclinic atmosphere}, Quart J. Roy. Meteor. Soc. 117 (1991), 299--317.
\bibitem{kiladis} G. Kiladis, M. Wheeler, \textit{Horizontal and vertical structure of observed tropospheric equatorial Rossby waves}, J. Geophys. Res. 100 (1995), 22981--22997.
\bibitem{kunzinger}  M. Kunzinger, R.O. Popovych, \textit{Potential conservation laws}, J. Math. Phys. 49 (2008), no.~10, paper 103506, arXiv:0803.1156.
\bibitem{lw}D. Levi, P. Winternitz, {\it Continuous symmetries of difference equations}, J. Phys. A: Math. Gen. 39 (2006), R1--R63, arXiv:nlin/0502004.
\bibitem{majda2} B. Khouider, A. J. Majda, S. N. Stechmann, \textit{Climate science in the tropics: waves, vortices and PDEs}, Nonlinearity 26 (2013), R1--R68.
\bibitem{majda4} A. J. Majda, \textit{Introduction to PDEs and Waves for the Atmosphere and Ocean}, 
NYU, CIMS, New York; AMS, Providence, RI, 2003. 
\bibitem{majda1} A. J. Majda, J. A. Biello, \textit{The Nonlinear Interaction of Barotropic and Equatorial Baroclinic Rossby Waves}, Journal of the atmospheric sciences, 60, no. 15 (2003), 1809-1821.
\bibitem{majda5} A. J. Majda, M. Shefter, \textit{Models for stratiform instability and convectively coupled waves}, J. Atmos. Sci. 58 (2001), 1567--1584.
\bibitem{majda6} A. J. Majda, R. Klein, \textit{Systematic multiscale models for the Tropics}, J. Atmos. Sci. 60 (2003), 357--372.
\bibitem{meshkov}A. G. Meshkov, \textit{On symmetry classification of third-order evolutionary systems of divergent type}, J. Math. Sci. 151 (2008), no.~4, 3167-3181.  
\bibitem{symmetries} A. V. Mikhailov, V. V. Sokolov, \textit{Symmetries of Differential Equations and the Problem of Integrability}, in \textit{Integrability}, A. V. Mikhailov (ed.), 19--98, Springer, Berlin etc., 2009.
\bibitem{wh_integrability} A. V. Mikhailov, A. B. Shabat, V. V. Sokolov, \textit{Thy Symmetry Approach to Classification of integrable equations}, in \textit{What is Integrability?}, V. E. Zakharov (ed.), 115--184,
Springer, Heidelberg etc., 1991.
\bibitem{complete_lists} A. V. Mikhailov, A. B. Shabat, R. I. Yamilov, \textit{The symmetry approach to the classification of non-linear equations. Complete lists of integrable systems}, Russian Math. Surveys 42 (1987),no.~4, 1--63.
\bibitem{neelin} J. D. Neelin, N. Zeng, \textit{A quasi-equilibrium tropical circulation model - Formulation}, J. Atmos. Sci. 57 (2000), 1741--1766.
\bibitem{Olver} P. J. Olver, \textit{Applications of Lie Groups to Differential Equations}, Springer, N. Y., 1993.
\bibitem{ps}R.O. Popovych and A. Sergyeyev, \textit{Conservation laws and normal forms of evolution equations}, Phys. Lett. A 374 (2010), no.~22, 2210--2217, arXiv:1003.1648.
\bibitem{sw}J. A. Sanders, J. P. Wang,
\textit{Number theory and the symmetry classification of integrable systems}, in \textit{Integrability}, A.V. Mikhailov (ed.), 89--118,
Lecture Notes in Phys., 767, Springer, Berlin etc., 2009.
\bibitem{sergy}A. Sergyeyev, \textit{A strange recursion operator demystified}, J. Phys. A: Math. Gen. 38 (2005),  L257--L262, arXiv:nlin.SI/0406032.
\bibitem{sergyeyev} A. Sergyeyev, \textit{On time-dependent symmetries and formal symmetries of evolution equations}, in \textit{Symmetry and perturbation theory (Rome, 1998)}, G. Gaeta (ed.), 303--308, World Scientific 1999, arXiv:solv-int/9902002.
\bibitem{sergdem} A. Sergyeyev, D. Demskoi, \textit{Sasa--Satsuma (complex modified Korteweg--de Vries II) and the complex sine-Gordon II equation revisited: Recursion operators, nonlocal symmetries, and more}, J. Math. Phys. 48 (2007), no.~4, paper 042702, arXiv:nlin/0512042.
\bibitem{test} A. B. Shabat, A. V. Mikhailov, \textit{Symmetries - Test of Integrability}, in {\em Important developments in soliton theory}, 355--374,  
    Springer, Berlin etc., 1993.
\bibitem{sokolov} V. V. Sokolov, \textit{On the symmetries of evolution equations}, Russian Math. Surveys 43 (1988), no. 5, 165--204.
\bibitem{vodova}J. Vodov\'a, \textit{A complete list of conservation laws for non-integrable compacton equations of $K(m, m)$ type}, Nonlinearity 26 (2013), 757--762, arXiv:1206.4401.
\bibitem{wang} B. Wang, X. Xie, \textit{Low-frequency equatorial waves in vertically sheared zonal flow. Part I: Stable waves}, J. Atmos. Sci., 53 (1996), 449--467.
\bibitem{vlad}V. A. Vladimirov, C.M\c{a}czka, A. Sergyeyev, S. Skurativskyi, \textit{Stability and dynamical features of solitary wave solutions for a hydrodynamic-type system taking into account nonlocal effects}, Comm. Nonlin. Sci. Numer. Simul. 19 (2014), no.~6, 1770--1782, arXiv:1207.6198.
\end{thebibliography}
\end{document}